\DeclarePairedDelimiter\ceil{\lceil}{\rceil}
\DeclarePairedDelimiter\floor{\lfloor}{\rfloor}
\theoremstyle{plain}
\newtheorem{theorem}{Theorem}
\newtheorem{proposition}{Proposition}
\newtheorem{lemma}{Lemma}
\newtheorem{corollary}{Corollary}
\newtheorem{construction}{Construction}
\theoremstyle{definition}
\newtheorem{definition}{Definition}
\newtheorem{example}{Example}
\newcommand{\C}{{\mathcal C}}
\newcommand{\cE}{{\mathcal E}}
\newcommand{\cL}{{\mathcal L}}
\newcommand{\cS}{{\mathcal S}}
\newcommand{\bL}{{\boldsymbol L}}
\newcommand{\bp}{{\boldsymbol p}}
\newcommand{\bu}{{\boldsymbol u}}
\newcommand{\bv}{{\boldsymbol v}}
\newcommand{\by}{{\boldsymbol y}}
\newcommand{\bw}{{\boldsymbol{w}}}
\newcommand{\bx}{{\boldsymbol{x}}}
\newcommand{\vu}{{\sf u}}
\renewcommand{\ge}{\geqslant}
\renewcommand{\le}{\leqslant}
\newcommand{\lek}{{{\le}k}}
\newcommand{\rate}{{\rm rate}}
\newcommand{\Irr}{{\rm Irr}}
\newcommand{\et}{{\emph{et al.}}}
\begin{document}

\pagestyle{empty}

\title{Efficient Encoding/Decoding of Irreducible Words for Codes Correcting Tandem Duplications}

\author{
   \IEEEauthorblockN{
	Yeow Meng Chee,
	Johan Chrisnata,
	Han Mao Kiah, and
	Tuan Thanh Nguyen}
   \IEEEauthorblockA{
   School of Physical and Mathematical Sciences,
	Nanyang Technological University, Singapore\\
email: \{{ymchee}, {jchrisnata}, {hmkiah}, {nguyentu001}\}@ntu.edu.sg\\[-5mm]
 } 
}
\maketitle

\hspace{-3mm}\begin{abstract}
Tandem duplication is the process of inserting a copy of a segment of DNA adjacent to the original position. Motivated by applications that store data in living organisms, Jain \et\ (2017) proposed the study of codes that correct tandem duplications. 
Known code constructions are based on {\em irreducible words}.

We study efficient encoding/decoding methods for irreducible words.
First, we describe an $(\ell,m)$-finite state encoder
and show that when  $m=\Theta(1/\epsilon)$ and $\ell=\Theta(1/\epsilon)$, 
the encoder achieves rate that is $\epsilon$ away from the optimal.
Next, we provide ranking/unranking algorithms for irreducible words and 
modify the algorithms to reduce the space requirements for the finite state encoder.
\end{abstract}


\section{Introduction}

Advances in synthesis and sequencing technologies 
have made DNA macromolecules an attractive medium for digital information storage.
Besides being biochemically robust, DNA strands offer ultrahigh storage densities of $10^{15}$-$10^{20}$ bytes per gram of DNA, as demonstrated in recent experiments (see \cite[Table 1]{Yazdi.2017}).

These synthetic DNA strands may be stored {\em ex vivo} or {\em in vivo}. 
When the DNA strands are stored {\em ex vivo} or in a non-biological environment, 
code design takes into account the synthesising and sequencing platforms being used 
(see \cite{Yazdi.etal:2015b} for a survey of the various coding problems).
In contrast, when the DNA strands are stored {\em in vivo} or recombined with the DNA of a living organism,
we design codes to correct errors due to the biological mutations.

This work looks at the latter case, and 
specifically, examines codes that correct errors due to {\em tandem duplications}. 
Tandem duplications or repeats is one of the two common repeats found in the human genome \cite{Lander} 
and they are caused by slipped-strand mispairings \cite{Mundy}.
They occur in DNA when a pattern of one or more nucleotides is repeated and the repetitions are directly adjacent to each other. For example, consider the string or word {\tt AGTAGTCTGC}. The substring {\tt AGTAGT} is a tandem repeat, and we say that {\tt AGTAGTCTGC} is generated from {\tt AGTCTGC} by a {\em tandem duplication} of length three. 

Jain \et\ \cite{Jain} first proposed the study of codes that correct errors due to tandem duplications. 
In the same paper, Jain \et\ used {\em irreducible} words (see Section \ref{sec:prelim} for definition) to construct
 a family of codes that correct tandem duplications of lengths at most $k$, where $k\in\{2,3\}$.
While these codes are optimal in size for the case $k=2$, these codes are not optimal for $k=3$,
and in fact, Chee \et\ \cite{Chee:2017} constructed a family of codes with strictly larger size.
Recently, Jain \et\ \cite{Jain:2017} looked at other error mechanisms, and studied the capacity of these tandem-duplication systems in the presence of point-mutation noise (substitution errors).

In this paper, we look at encoding/decoding methods for irreducible words.
In particular, we provide polynomial-time algorithms that encodes either exactly the rates of irreducible words 
or close to the asymptotic rates of irreducible words.
While the encoding/decoding algorithms are standard in constrained coding \cite{ROTH}
and combinatorics literature \cite{Nijenhuis.2014}, 
our contribution is a detailed analysis of the space and time complexities of the respective algorithms.
Before we state the main results of the paper, we go through certain notations.

\subsection{Notation and Terminology}\label{sec:prelim}

Let $[n]$ denotes the set $\{1,2,\ldots, n\}$.
Let $\Sigma_q=\{0,1,\cdots q-1\}$ be an alphabet of $q\ge 2$ symbols.
For a positive integer $n$, let $\Sigma_q^n$ denote the set of all words of length $n$ over $\Sigma_q$,
and let $\Sigma_q^*$ denote the set of all words over $\Sigma_q$ with finite length.
Given two words ${\bx}, {\by} \in \Sigma_q^*$, we denote their concatenation by ${\bx\by}$. 

We state the {\em tandem duplication} rules. 
For integers $k\le n$ and $i\le n-k$, we define $T_{i,k} : \Sigma_q^{n} \to \Sigma_q^{n+k}$ such that
$T_{i,k}({\bx}) = \bu\bv\bv\bw$, where $\bx=\bu\bv\bw,\, |\bu|=i,\, |\bv|=k$.

If a finite sequence of tandem duplications of length at most $k$ is performed to obtain ${\by}$ from ${\bx}$, 
then we say that ${\by}$ is a ${\le}k$-\textit{descendant} of ${\bx}$, or 
$\bx$ is a $\lek$-\textit{ancestor} of ${\by}$ .
Given a word $\bx$, we define
the ${\le}k$-{\em descendant cone} of $\bx$ is the set of all $\lek$-descendants of $\bx$
and denote this cone by $D_{\lek}^*(\bx)$.

\begin{example}
Consider ${\bx}=01210$ over $\Sigma_3$. 
We have $T_{1,3}({\bx}) = 01211210$ and $T_{0,2}(01211210)=0101211210$.
So, $0101211210 \in D_{\le 3}^*({\bx})$. 
\end{example}
\begin{definition}[${\le}k$-Tandem-Duplication Codes] 
A subset ${\cal C} \subseteq \Sigma_q^n$ is a ${\le}k$-{\em tandem-duplication code} if 
for all ${\bx},{\by} \in {\cal C}$ and ${\bx} \ne {\by}$, we have that $D_{\le k}^*({\bx})\cap D_{\le k}^*({\by}) = \varnothing$.
We say that ${\cal C}$ is an $(n,{\le}k;q)$-TD code. 
\end{definition}

The {\em size} of $\C$ refers to $|\C|$, while the {\em rate} of $\C$ is given by $(1/n)\log_q|\C|$.
Given an infinite family $\{\C_n \mid \C_n \mbox{ is of length } n\}_{n=1}^\infty$, its {\em asymptotic rate}
is given by $\lim_{n\to\infty} (1/n)\log_q|\C_n|$.

\subsection{Irreducible Words}

Of interest is a family of tandem-duplication codes constructed by Jain \et \cite{Jain}.
Crucial to the code construction is the concept of irreducible words and roots. 

\begin{definition} A word is {\em $\lek$-irreducible} if it cannot be deduplicated into shorter words 
with deduplications of length at most $k$. We use ${\rm Irr}_{\lek}(n,q)$ to denote the set of all $\lek$-irreducible words  of length $n$ over $\Sigma_q$.
The $\lek$-ancestors of ${\bx} \in \Sigma_q^*$ that are $\lek$-irreducible words are called the $\lek$-\textit{roots} of 
${\bx}$. 
\end{definition}


\begin{construction}[Jain \et \cite{Jain}]\label{code:jain}
For $k\in\{2,3\}$ and $n \ge k$. An $(n,{\le}k; q)$-TD-code ${\cal C}(n, {\le}k;q)$ is given by
\begin{equation*}
{\C}(n,\lek;q)\triangleq\bigcup_{i=1}^{n} \left\{\xi_{n-i}({\bx})\mid {\bx} \in {\rm Irr}_{\le k}(i,q)\right\}.
\end{equation*} 
Here, $\xi_{i}({\bx})=\bx z^i$, where $z$ is the last symbol of $\bx$. 
\end{construction}

We point out certain advantages of Construction \ref{code:jain}.
\begin{enumerate}[(a)]
\item {\em Almost optimal rates}. Jain \et\  demonstrated that Construction \ref{code:jain} is optimal for $k=2$.
However, when $k=3$, Chee \et\ \cite{Chee:2017} provided constructions that achieve almost twice the size
in Construction \ref{code:jain} (see \cite[Table I]{Chee:2017}). 
Unfortunately, the asymptotic rate of the latter is the same as Construction~\ref{code:jain}.
Therefore, the set of irreducible words gives the best known asymptotic rates for $k=3$.

Furthermore, for $q\ge 5$ and $k=3$, the asymptotic rates of Construction \ref{code:jain} differs from a theoretical upper bound (see \cite[Proposition 4]{Chee:2017} and Table~\ref{rates}) by at most $0.01$. In other words,  Construction \ref{code:jain} is almost optimal in terms of rates.

\item {\em Linear-time decoding}. Consider $\bx\in{\C}(n,\lek;q)$ and we read $\by\in D_{\lek}^*(\bx)$. 
To retrieve the codeword $\bx$, we simply compute the $\lek$-root of $\by$ and extend the root if the root is shorter than $n$. Jain \et\ showed that there is at most one root when $k\in\{2,3\}$, while
Chee \et\ provided algorithms to compute these roots in linear time \cite{Chee:2017}.
\end{enumerate}

In view of these points, we study other practical aspects of Construction \ref{code:jain}.
Specifically, we look at {\em efficient encoding} of messages in $\Sigma_q^\ell$ to codewords in $\bx\in{\C}(n,\lek;q)$ 
for some $\ell<n$.

To this end, we look at the rates of $\C(n,\lek;q)$.
Let $I_{\lek}(n,q)\triangleq |{\rm Irr}_{\lek}(n,q)|$.
Then the size of ${\C}(n,\lek;q)$ is given by $\sum_{i=1}^n I_{\lek}(i,q)$.
Let $\rate_{\lek}(n,q)$ and $\rate_{\lek}(q)$ denote the  rate and asymptotic rate of ${\C}(n,\lek;q)$, respectively.
In other words, $\rate_{\lek}(n,q)\triangleq(1/n)\log_q|\C(n,\lek;q)|$
and $\rate_{\lek}(q)\triangleq \lim_{n\to\infty} \rate_{\lek}(n,q)$.
Jain \et\ observed that $\bigcup_{n=1}^{\infty} {\rm Irr}_{\lek}(n,q)$ is a regular language and 
hence, 
\begin{equation}\label{eq:rate}
\rate_{\lek}(q)= \lim_{n\to\infty} \frac{\log_q I_{\lek}(n,q)}{n}.
\end{equation}
Furthermore, using Perron-Frobenius theory (see \cite{ROTH}), Jain \et\ computed 
$\rate_{{\le}3}(3)$ to be approximately $0.347934$. 
In view of \eqref{eq:rate}, we look at encoding of the words in ${\rm Irr}_{\lek}(n)$ instead and
the extension of our encoding methods to ${\C}(n,\lek;q)$ is straightforward.

\subsection{Our Contributions}

We first develop a recursive formula for $I_{\lek}(n,q)$ and hence,
provide a formula for the asymptotic rate for $\C(n,\lek;q)$.
We then provide two efficient encoding methods
and use combinatorial insights provided by the recursive formula
to analyse the space and time complexities.

Specifically, our main contributions are as follows.
\begin{enumerate}[(A)]
\item We compute $\rate_{\lek}(q)$ for all $q$ and $k\in\{2,3\}$ in Section~\ref{sec:enumerate}.
\item In Section~\ref{sec:fse}, we propose an $(\ell,m)$-finite state encoder with rate $\ell/m$.
Furthermore, we show that we can choose the lengths $\ell$ and $m$ to be small and 
yet come close to the asymptotic rate. In particular, if we choose $m=\Theta(1/\epsilon)$ and $\ell=\Theta(1/\epsilon)$,
we showed that the rate is at least $\rate_{\lek}(q)$. 
Here, the running time for the encoder is linear in codeword length $n$ for constant $\epsilon$ .
\item Using bijections developed Section~\ref{sec:enumerate}, we provide a ranking/unranking algorithm 
that encodes with rate equal to $(1/n)\log_q({\rm Irr}_{\lek}(n,q))$ in Section~\ref{sec:rank}. 
This algorithm runs in $O(n^2)$ time using $O(n^2)$ space.
Furthermore, this ranking/unranking technique can be modified to reduce the space requirement to $O(m^2)$ 
in the $(\ell,m)$-finite state encoder.
\end{enumerate}

Due to space constraints, we present proofs and illustrate examples for the case $k=2$ and 
simply state the relevant results for $k=3$.
The detailed proofs are deferred to the full paper.

\section{Enumerating Irreducible Words}\label{sec:enumerate}

In this section, we compute $\rate_{\lek}(q)$ for all $q$ and $k\in\{2,3\}$ by 
obtaining a recursive formula for $I_\lek(n,q)$. 
While the Perron-Frobenius theory (see \cite{ROTH}) is sufficient to determine the asymptotic rates, 
the recursive formula is useful in the analysis of the finite state encoder in Section~\ref{sec:fse}
and the development of the ranking/unranking methods in Section~\ref{sec:rank}.

To this end, we partition the set of irreducible words into two classes and 
provide bijections from irreducible words of shorter lengths into them.
Specifically, notice that the suffix of an irreducible word is of the form 
either $aba$ or $abc$, where $a$, $b$, $c$ are distinct symbols.
Hence, we let $\Irr^{(s)}_\lek(2,n,q)$ and $\Irr^{(s)}_\lek(3,n,q)$ denote
the set of irreducible words with length-three suffixes that have two and three distinct symbols, respectively.

In the case $k=2$, we consider the following maps for $n\ge 4$,
\begin{align*}
\phi: \Irr_{\le{2}}(n-1,q)\times [q-2]&\to\Irr_{\le{2}}^{(s)}(3,n,q),\\
\psi: \Irr_{\le{2}}(n-2,q)\times [q-2]&\to\Irr_{\le{2}}^{(s)}(2,n,q).
\end{align*}

We first define $\phi$. If $\bx=x_1x_2\ldots x_{n-1}\in \Irr_{\le{2}}(n-1,q)$ and $i\in [q-2]$,
set $\sigma$ to be the $i$th element in $\Sigma_q\setminus \{x_{n-2},x_{n-1}\}$.
Then set $\phi(\bx,i)=x_1x_2\ldots x_{n-1}\sigma$.

For $\psi$, let $\bx=x_1x_2\ldots x_{n-2}\in \Irr_{\le{2}}(n-2,q)$ and $i\in [q-2]$ and 
set $\sigma$ to be the $i$th element in $\Sigma_q\setminus \{x_{n-3},x_{n-2}\}$.
Then set $\psi(\bx,i)=x_1x_2\ldots x_{n-2}\sigma x_{n-2}$.

\begin{proposition}\label{prop:recursion}
The maps $\phi$ and $\psi$ are bijections.
\end{proposition}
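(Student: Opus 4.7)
The plan is to prove each map is both well-defined (the image lies in the claimed target set) and invertible. Since $\phi$ and $\psi$ simply append one or two prescribed symbols to $\bx$, injectivity is immediate (the prefix $\bx$ is visible inside the image, and the appended letter $\sigma$ determines the index $i$). The real content lies in verifying the image properties and in constructing the inverses.

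For $\phi$, I will check two things. First, that the suffix of $\phi(\bx,i)$ consists of three distinct symbols: $x_{n-2}\neq x_{n-1}$ because $\bx$ is $\le 2$-irreducible (no $aa$ pattern), while $\sigma$ is chosen precisely to avoid $\{x_{n-2},x_{n-1}\}$. Second, that the new word remains $\le 2$-irreducible. Since $\bx$ itself is irreducible, only the few substrings of length $2$ and $4$ that end at the appended symbol need to be inspected, and each potential $aa$ or $abab$ violation is blocked directly by the two exclusions defining $\sigma$. The argument for $\psi$ is analogous: the appended block $\sigma x_{n-2}$ yields a suffix $x_{n-2}\sigma x_{n-2}$ of type $aba$, and again only constantly many new length-$2$/length-$4$ substrings need checking, all of which would force $\sigma\in\{x_{n-3},x_{n-2}\}$, contradicting the definition of $\sigma$.

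For surjectivity I will exhibit explicit inverses by truncation. Given $\by\in \Irr_{\le 2}^{(s)}(3,n,q)$, set $\bx$ to be its length-$(n-1)$ prefix (irreducible as a prefix of an irreducible word) and read off $\sigma=y_n$; the suffix-type hypothesis guarantees $\sigma\notin\{y_{n-2},y_{n-1}\}$, and hence a unique index $i\in[q-2]$. For $\by\in \Irr_{\le 2}^{(s)}(2,n,q)$, truncate to length $n-2$ and set $\sigma=y_{n-1}$. The main obstacle is the last verification: one needs $\sigma=y_{n-1}\notin\{y_{n-3},y_{n-2}\}$. The exclusion $\sigma\neq y_{n-2}$ follows from $\le 2$-irreducibility ($y_{n-2}y_{n-1}$ is not $aa$), but $\sigma\neq y_{n-3}$ requires combining the defining equation $y_{n-2}=y_n$ of the target class with the $abab$-freeness of $\by$ applied to the four-substring $y_{n-3}y_{n-2}y_{n-1}y_n$. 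Once this single combinatorial observation is isolated, the bijections close up routinely and the inverse maps obtained are genuinely two-sided.
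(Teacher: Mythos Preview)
Your proposal is correct and follows essentially the same approach as the paper: construct explicit inverse maps by truncation and verify that the compositions are identities. The paper's proof is considerably terser (it simply writes down $\phi^{-1}$ and $\psi^{-1}$ and asserts ``it can be verified'' that the compositions are identities), whereas you additionally spell out the well-definedness checks---in particular the key observation that for $\by\in\Irr_{\le 2}^{(s)}(2,n,q)$ one necessarily has $y_n=y_{n-2}$, and hence $y_{n-1}=y_{n-3}$ would create an $abab$ substring---which the paper leaves implicit.
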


\begin{proof}
We construct the inverse map for $\phi$. Specifically, we set $\phi^{-1}:\Irr_{\le{2}}^{(s)}(3,n,q)\to \Irr_{\le{2}}(n-1,q)\times [q-2]$ such that $\phi^{-1}(\bx)=(x_1\ldots x_{n-1},i)$, where $i$ is the index of $x_n$ in 
$\Sigma_q\setminus\{x_{n-2},x_{n-1}\}$.
It can be verified that $\phi\circ \phi^{-1}$ and $\phi^{-1}\circ \phi$ are identity maps on their respective sets.
Similarly, the inverse map for $\psi$ is given by $\psi^{-1}:\Irr_{\le{2}}^{(s)}(2,n,q)\to \Irr_{\le{2}}(n-2,q)\times [q-2]$ such that $\psi^{-1}(\bx)=(x_1\ldots x_{n-2},i)$, where $i$ is the index of $x_{n-1}$ in 
$\Sigma_q\setminus\{x_{n-3},x_{n-2}\}$.
\end{proof}

The following corollary is then immediate.

\begin{corollary}
We have that $I_{{\le}2}(2,q)=q(q-1)$, $I_{{\le}2}(3,q)=q(q-1)^2$, and  
\begin{equation}\label{recursion:i2}
I_{{\le}2}(n,q)=
(q-2)I_{{\le}2}(n-1,q)+(q-2)I_{{\le}2}(n-2,q)
\end{equation}
for $n\ge 4$. Therefore, the asymptotic rate is $\rate_{{\le}2}(q)=\log_q \lambda_2$, 
where $\lambda_2=(q-2+ \sqrt{q^2 -4})/2$. 
\end{corollary}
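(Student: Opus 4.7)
The plan is to establish the two base cases by direct counting, then combine the bijections of Proposition \ref{prop:recursion} with a suffix-based partition to obtain the recurrence, and finally read off the asymptotic rate from the dominant root of the associated characteristic polynomial.

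For the base cases, I would argue that a word in $\Irr_{\le 2}(2,q)$ is any length-$2$ word $x_1 x_2$ with $x_1 \ne x_2$ (otherwise a length-$1$ tandem repeat would deduplicate it), yielding $q(q-1)$. For length $3$, a length-$2$ tandem would require four symbols, so the only irreducibility conditions are $x_1 \ne x_2$ and $x_2 \ne x_3$, giving $q(q-1)^2$ words.

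For $n \ge 4$, I would observe that any $\le 2$-irreducible word satisfies $x_{n-2}\ne x_{n-1}$ and $x_{n-1}\ne x_n$, so its length-$3$ suffix is either of the form $aba$ (when $x_{n-2}=x_n$) or $abc$ (when the three suffix symbols are pairwise distinct). This produces a disjoint partition $\Irr_{\le 2}(n,q) = \Irr^{(s)}_{\le 2}(2,n,q) \sqcup \Irr^{(s)}_{\le 2}(3,n,q)$. Applying the bijections $\psi$ and $\phi$ of Proposition \ref{prop:recursion} gives $|\Irr^{(s)}_{\le 2}(2,n,q)| = (q-2)\,I_{\le 2}(n-2,q)$ and $|\Irr^{(s)}_{\le 2}(3,n,q)| = (q-2)\,I_{\le 2}(n-1,q)$; summing these cardinalities produces \eqref{recursion:i2}.

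To extract the asymptotic rate, I would solve the linear recurrence via its characteristic polynomial $\lambda^2 - (q-2)\lambda - (q-2) = 0$, whose discriminant simplifies as $(q-2)^2 + 4(q-2) = (q-2)(q+2) = q^2-4$, so the roots are $\bigl((q-2) \pm \sqrt{q^2-4}\bigr)/2$ with dominant root $\lambda_2$. Since $I_{\le 2}(n,q) > 0$ grows without bound, the coefficient of $\lambda_2^n$ in the closed-form solution is strictly positive, so $I_{\le 2}(n,q) = \Theta(\lambda_2^n)$; combined with \eqref{eq:rate}, this yields $\rate_{\le 2}(q) = \log_q \lambda_2$. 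There is no substantive obstacle here: the only things to verify carefully are the exhaustiveness of the suffix partition and the positivity of the dominant-root coefficient, both of which are immediate.
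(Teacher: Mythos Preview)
Your proposal is correct and mirrors the paper's approach exactly: the paper declares the corollary ``immediate'' from Proposition~\ref{prop:recursion}, and you have simply spelled out that immediacy---the suffix partition into $\Irr^{(s)}_{\le 2}(2,n,q)$ and $\Irr^{(s)}_{\le 2}(3,n,q)$ is set up in the paper just before the proposition, and the bijections $\phi,\psi$ give the two summands. Your added detail on the base cases and on extracting $\lambda_2$ from the characteristic polynomial is routine and correct.
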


In the next section, we are interested in irreducible words with certain prefixes or suffixes.
Specifically, let $\bp$ be a word of length $\ell<n$. 
Then we denote the set of irreducible words of length $n$ with prefix $\bp$ by $\Irr^{(p)}_{\lek}(\bp,n,q)$.
The set of irreducible words of length $n$ with suffix $\bp$ is denoted by $\Irr^{(s)}_{\lek}(\bp,n,q)$.

Fix $\bp$. Notice that the maps $\phi$ and $\psi$ simply appends one and two symbols, respectively, to words in their domains.
Hence, if we apply the maps to a word with prefix $\bp$, the image also has the same prefix $\bp$.
Therefore, both $\phi$ and $\psi$ remain as bijections when we restrict the domains and codomains
to the irreducible words with prefix $\bp$. 
In other words, we obtain a similar recursion for $\Irr^{(p)}_{{\le}2}(\bp,n,q)$.

\begin{corollary}\label{cor:recursion-prefix}
Let $\bp\in\Sigma_q^\ell$
For $n\ge \ell+2$,
\begin{align}
\left|\Irr^{(p)}_{{\le}2}(\bp,n,q) \right| &=
(q-2)\left|\Irr^{(p)}_{{\le}2}(\bp,n-1,q) \right| \notag\\
&~~~+(q-2) \left|\Irr^{(p)}_{{\le}2}(\bp,n-2,q) \right|. \label{recursion:p2}
\end{align}
\end{corollary}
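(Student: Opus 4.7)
The plan is to mimic the proof of \eqref{recursion:i2} by restricting the bijections $\phi$ and $\psi$ from Proposition~\ref{prop:recursion} to irreducible words that carry the fixed prefix $\bp$. First, I would partition $\Irr^{(p)}_{\le 2}(\bp,n,q)$ according to the shape of its length-three suffix, obtaining the two subclasses
\[
\Irr^{(p)}_{\le 2}(\bp,n,q) \cap \Irr^{(s)}_{\le 2}(3,n,q)
\quad\text{and}\quad
\Irr^{(p)}_{\le 2}(\bp,n,q) \cap \Irr^{(s)}_{\le 2}(2,n,q).
\]
Since $n \ge \ell + 2$, both subclasses are well-defined, and together they exhaust $\Irr^{(p)}_{\le 2}(\bp,n,q)$.

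Next, I would lean on the crucial observation that $\phi$ only alters the $n$th coordinate of its input, while $\psi$ only alters coordinates $n-1$ and $n$; correspondingly, the inverses $\phi^{-1}$ and $\psi^{-1}$ simply truncate the last one or two symbols. Because $n \ge \ell + 2$, all four of these operations leave the initial $\ell$ symbols untouched. Hence $\phi$ restricts to a bijection from $\Irr^{(p)}_{\le 2}(\bp,n-1,q) \times [q-2]$ onto $\Irr^{(p)}_{\le 2}(\bp,n,q) \cap \Irr^{(s)}_{\le 2}(3,n,q)$, and $\psi$ restricts analogously onto the other subclass. Summing the two cardinalities and factoring out the $q-2$ from each $[q-2]$ factor immediately yields \eqref{recursion:p2}.

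The only delicate point is the length bound $n \ge \ell + 2$: it is exactly what guarantees that truncating two coordinates off a word of length $n$ with prefix $\bp$ still produces a word of length at least $\ell$, so that $\psi^{-1}$ actually lands in $\Irr^{(p)}_{\le 2}(\bp,n-2,q)$ rather than off the end of the prefix. Beyond this bookkeeping there is no substantive obstacle; the argument is a transparent transport of Proposition~\ref{prop:recursion} through the added prefix constraint, and the same template will handle the analogous $k=3$ recursion once the suffix partition is refined accordingly.
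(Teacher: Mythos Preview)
Your proposal is correct and follows essentially the same approach as the paper: the paper argues (in the paragraph immediately preceding the corollary) that $\phi$ and $\psi$ merely append symbols, hence preserve any fixed prefix, so they restrict to bijections on the prefix-constrained sets. Your write-up is in fact slightly more careful than the paper's, since you explicitly verify that the inverses $\phi^{-1},\psi^{-1}$ (truncation) also respect the prefix and pinpoint why the bound $n\ge\ell+2$ is exactly what is needed for $\psi^{-1}$ to land in $\Irr^{(p)}_{\le 2}(\bp,n-2,q)$.
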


We provide the recursion for $\Irr_{{\le}3}(n,q)$.

\begin{proposition}
We have that $I_{{\le}3}(3,q)=q(q-1)^2$, $I_{{\le}3}(4,q)=q^2(q-1)(q-2)$, $I_{{\le}3}(5,q)=q(q-1)(q-2)(q^2-q-1)$
and  
\begin{align}
I_{{\le}3}(n,q)&
= (q-2)I_{{\le}3}(n-1,q)+(q-3)I_{{\le}3}(n-2,q)\notag\\
&~~~+(q-2)I_{{\le}3}(n-3,q)\label{recursion:i3}
\end{align}
for $n\ge 6$. Therefore, $\rate_{{\le}3}(q)=\log_q \lambda_3$, 
where $\lambda_3$ is the largest real root of equation $x^3-(q-2)x^2-(q-3)x-(q-2)=0$.
\end{proposition}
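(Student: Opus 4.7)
The plan is to follow exactly the template established for $k=2$: verify the three initial values $I_{{\le}3}(3,q)$, $I_{{\le}3}(4,q)$, and $I_{{\le}3}(5,q)$ by direct enumeration, then for $n \ge 6$ partition $\Irr_{\le 3}(n, q)$ into three disjoint classes corresponding to distinct suffix structures, and exhibit a bijection from each class to a Cartesian product of the form $\Irr_{\le 3}(n-j, q) \times [c_j]$ for $(j, c_j) \in \{(1, q-2), (2, q-3), (3, q-2)\}$. Summing the three cardinalities then yields the claimed recursion.

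The base cases should be immediate: for $n \le 5$, a length-$3$ deduplication would require two adjacent identical length-$3$ blocks, hence a word of length at least $6$. Consequently $\Irr_{\le 3}(n, q) = \Irr_{\le 2}(n, q)$ for $n \le 5$, and the initial values can be recovered either by direct counting under the length-$1$ and length-$2$ constraints or by applying the recursion \eqref{recursion:i2} with $I_{{\le}2}(2,q) = q(q-1)$ and $I_{{\le}2}(3, q) = q(q-1)^2$.

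For the recursion step, I would mimic the construction of $\phi$ and $\psi$ by defining three maps. Namely, $\phi_1 : \Irr_{\le 3}(n-1, q) \times [q-2] \to \Irr_{\le 3}(n, q)$ appends a single symbol different from the last two symbols of its input, exactly as in $\phi$; $\phi_2 : \Irr_{\le 3}(n-2, q) \times [q-3] \to \Irr_{\le 3}(n, q)$ appends a carefully chosen length-$2$ suffix that produces a word whose tail saturates the length-$2$ duplication constraint; and $\phi_3 : \Irr_{\le 3}(n-3, q) \times [q-2] \to \Irr_{\le 3}(n, q)$ appends a length-$3$ suffix that saturates the length-$3$ duplication constraint. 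I would then classify each $w \in \Irr_{\le 3}(n, q)$ by its last few symbols to determine in which of the three images it lies, and define inverses $\phi_j^{-1}$ in the same style as the proof of Proposition~\ref{prop:recursion}.

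The main technical obstacle is the case analysis on suffix patterns: because the length-$3$ constraint reaches $6$ positions back, many more suffix equivalence classes must be tracked than in the $k=2$ proof, and one must check that the three image classes are pairwise disjoint and jointly exhaustive. In particular, justifying the coefficient $q-3$ in the middle term (rather than $q-2$, as one might naively guess from the $k=2$ case) requires accounting for an additional symbol forbidden by the length-$3$ constraint whenever the length-$2$ saturation is in place. Once the recursion is established, the asymptotic rate follows from standard linear-recurrence theory applied to the characteristic polynomial $x^3 - (q-2)x^2 - (q-3)x - (q-2)$: the Perron-Frobenius theorem, applied to the transfer matrix of the regular language $\bigcup_n \Irr_{\le 3}(n, q)$, identifies the dominant real root $\lambda_3$ as the growth rate, giving $I_{{\le}3}(n, q) = \Theta(\lambda_3^n)$ and hence $\rate_{{\le}3}(q) = \log_q \lambda_3$.
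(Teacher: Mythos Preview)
Your overall plan matches the paper's: partition $\Irr_{\le 3}(n,q)$ into three suffix classes and exhibit bijections to $\Irr_{\le 3}(n-j,q)\times[c_j]$. Your treatment of the base cases via $\Irr_{\le 3}(n,q)=\Irr_{\le 2}(n,q)$ for $n\le 5$ is correct. However, your description of $\phi_1$ contains a concrete error that would break the argument. You say $\phi_1$ appends a symbol different from the last two, ``exactly as in $\phi$''. For $k=3$ this map does not land in $\Irr_{\le 3}(n,q)$: take any $\bx\in\Irr_{\le 3}(n-1,q)$ ending in $bcabc$ (one checks $bcabc$ is ${\le}3$-irreducible), and append $a$, which your rule permits since $a\notin\{b,c\}=\{x_{n-2},x_{n-1}\}$. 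The result ends in $bcabca=(bca)(bca)$, a length-$3$ tandem repeat.

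The paper's partition is governed not by the number of distinct symbols among the last three positions but by whether $x_n=x_{n-3}$. The first map $\varphi_1$ targets the class $x_n\ne x_{n-3}$ (length-four suffix patterns $abcd$, $abcb$, $abac$), so when the input ends in three distinct symbols one must exclude $\{x_{n-3},x_{n-1}\}$, not $\{x_{n-2},x_{n-1}\}$; when the input ends in $aba$ one excludes $\{x_{n-2},x_{n-1}\}$ as you wrote. The maps $\varphi_2,\varphi_3$ then cover the suffixes with $x_n=x_{n-3}$, split further by a length-six suffix pattern, and each of $\varphi_1,\varphi_2,\varphi_3$ itself branches on whether its input lies in $\Irr^{(s)}_{\le 3}(2,\cdot,q)$ or $\Irr^{(s)}_{\le 3}(3,\cdot,q)$. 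Your intuition that the $(q-3)$ coefficient arises from an extra forbidden symbol is right, but that symbol is forbidden to prevent a length-$3$ repeat, and the case analysis is finer than ``saturating the length-$2$ constraint''.
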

\begin{proof}
Recall that for a word $\bp$ of length $\ell<n$, $\Irr^{(s)}_{\lek}(\bp,n,q)$ is the set of irreducible words of length $n$ with suffix $\bp$. 
Let $\bL\subseteq \Sigma_q^\ell$ be a set of suffixes.
We let $\Irr^{(s)}_{\lek}(\bL,n,q)$ denote the set of irreducible words of length $n$ with suffixes in $\bL$. 

To prove the proposition, we partition the set of irreducible words $\Irr_{\le{3}}(n,q)$ into three classes and provide bijections from irreducible words of shorter length into them. 
Specifically, we consider all possible suffixes of length six of a ${\le}3$-irreducible word. 
For a word $\bx\in \Irr_{\le{3}}(n,q)$, if $x_n=x_{n-3}=a$, then its suffix of length six must be of the form $\{bcabda, bcacba, bcacda, abacba, abacda\}$, where $a,b,c,d$ are distinct elements in $\Sigma_q$. 
On the other hand, if $x_n \neq x_{n-3}$, every suffix of length four must be of the form $\{abcd,abcb,abac\}$.
As such, we set
\begin{align*}
 \bL_1 &\triangleq \{abcd,abcb,abac \mid a,b,c,d \mbox{ distinct in } \Sigma_q\},\\ 
 \bL_2 &\triangleq \{bcabda,bcacba,bcacda \mid a,b,c,d \mbox{ distinct in } \Sigma_q\},\\
 \bL_3 &\triangleq\{abacba,abacda \mid a,b,c,d \mbox{ distinct in } \Sigma_q\}.
\end{align*}
We consider the following maps for $n\ge 6$.
\begin{align*}
\varphi_1: \Irr_{\le{3}}(n-1,q)\times [q-2]&\to\Irr_{\le{3}}^{(s)}(\bL_1,n,q),\\
\varphi_2: \Irr_{\le{3}}(n-2,q)\times [q-2]&\to\Irr_{\le{3}}^{(s)}(\bL_2,n,q),\\
\varphi_3: \Irr_{\le{3}}(n-3,q)\times [q-3]&\to\Irr_{\le{3}}^{(s)}(\bL_3,n,q).
\end{align*}
Recall that $\Irr_{\le{k}}(n,q)=\Irr_{\le{k}}^{(s)}(3,n,q) \cup \Irr_{\le{k}}^{(s)}(2,n,q).$
We first define $\varphi_1$. If $\bx=x_1\ldots x_{n-3}x_{n-2}x_{n-1}\in \Irr_{\le{3}}^{(s)}(3,n-1,q)$ and $i\in [q-2]$, set $\sigma$ to be the $i$th element in $\Sigma_q\setminus \{x_{n-3},x_{n-1}\}$.
Then set $$\varphi_1(\bx,i)=\bx \sigma=x_1\ldots x_{n-3}x_{n-2}x_{n-1}{\color{red}{\sigma}}.$$
If $\bx=x_1\ldots x_{n-3}x_{n-2}x_{n-1} \in \Irr_{\le{3}}^{(s)}(2,n-1,q)$ where $x_{n-3}=x_{n-1}$ and $i\in [q-2]$, set $\sigma$ to be the $i$th element in $\Sigma_q\setminus \{x_{n-2},x_{n-1}\}$. 
$$\varphi_1(\bx,i)=\bx \sigma=x_1\ldots x_{n-1}x_{n-2}x_{n-1}{\color{red}{\sigma}}.$$
Similarly, we now define $\varphi_2, \varphi_3$ as follows. \\
If $\bx=x_1\ldots x_{n-5}x_{n-4}x_{n-3}x_{n-2}\in \Irr_{\le{3}}^{(s)}(3,n-2,q)$ and $i\in [q-3]$, we set 
$$\varphi_2(\bx,i)=\bx \sigma x_{n-3} = x_1\ldots x_{n-5}x_{n-4}x_{n-3}x_{n-2}{\color{red}{\sigma x_{n-3}}},$$
where $\sigma$ is the $i$th element in $\Sigma_q\setminus \{x_{n-5},x_{n-3},x_{n-2}\}$ if $x_{n-5} \notin \{x_{n-2},x_{n-3}\}$ or the $i$th element in $\Sigma_q\setminus \{x_{n-4},x_{n-3},x_{n-2}\}$ if $x_{n-5} \in \{x_{n-2},x_{n-3}\}$. \\
If $\bx=x_1\ldots x_{n-5}x_{n-4}x_{n-3}x_{n-2} \in \Irr_{\le{3}}^{(s)}(2,n-2,q)$ where $x_{n-4}=x_{n-2}$ and $i\in [q-3]$, set $\sigma$ to be the $i$th element in $\Sigma_q\setminus \{x_{n-5},x_{n-3},x_{n-2}\}$ and 
$$\varphi_2(\bx,i)=\bx \sigma x_{n-3} = x_1\ldots x_{n-5}x_{n-4}x_{n-3}x_{n-2}{\color{red}{\sigma x_{n-3}}}.$$ 
If $\bx=x_1\ldots x_{n-5}x_{n-4}x_{n-3}\in \Irr_{\le{3}}^{(s)}(3,n-3,q)$ and $i\in [q-2]$, set $\sigma$ to be the $i$th element in $\Sigma_q\setminus \{x_{n-5},x_{n-3}\}$. Then set
$$\varphi_3(\bx,i)=\bx \sigma x_{n-5} x_{n-3}=x_1\ldots x_{n-5}x_{n-4}x_{n-3}{\color{red}{\sigma x_{n-5} x_{n-3}}}.$$ 
If $\bx=x_1\ldots x_{n-5}x_{n-4}x_{n-3} \in \Irr_{\le{3}}^{(s)}(2,n-3,q)$ where $x_{n-5}=x_{n-3}$ and $i\in [q-2]$, set $\sigma$ to be the $i$th element in $\Sigma_q\setminus \{x_{n-4},x_{n-3}\}$. Then set $$\varphi_3(\bx,i)=\bx \sigma x_{n-4} x_{n-3}=x_1\ldots x_{n-5}x_{n-4}x_{n-3}{\color{red}{\sigma x_{n-4} x_{n-3}}}.$$ 
We can prove that $\varphi_i$ is bijection for $i\in\{1,2,3\}$ by constructing the inverse map for each $\varphi_i$. We first prove $\varphi_1$ is bijection. Specifically, we set $\varphi_1^{-1}:\Irr_{\le{3}}^{(s)}(\bL_1,n,q) \to \Irr_{\le{3}}(n-1,q)\times [q-2]$ such that $\varphi_1^{-1}(\bx)=(x_1\ldots x_{n-3}x_{n-2}x_{n-1},i)$ where $i$ is the index of $x_n$ in $\Sigma_q\setminus \{x_{n-3},x_{n-1}\}$ if $x_{n-1} \neq x_{n-3}$ or $i$ is the index of $x_n$ in $\Sigma_q\setminus \{x_{n-2},x_{n-1}\}$ otherwises. It can be verified that $\varphi_1\circ \varphi_1^{-1}$ and $\varphi_1^{-1}\circ \varphi_1$ are identity maps on their respective sets. Similarly, the inverse maps for $\varphi_2,\varphi_3$ are given by 
\begin{align*}
\varphi_2^{-1}: \Irr_{\le{3}}^{(s)}(\bL_2,n,q) &\to \Irr_{\le{3}}(n-2,q)\times [q-2]\\
\varphi_3^{-1}: \Irr_{\le{3}}^{(s)}(\bL_3,n,q) &\to \Irr_{\le{3}}(n-3,q)\times [q-3].
\end{align*}
such that $\varphi_2^{-1}(\bx)=(x_1\ldots x_{n-3}x_{n-2},i)$ where $i$ is the index of $x_{n-1}$ in 
\begin{itemize}
\item $\Sigma_q\setminus \{x_{n-5},x_{n-3},x_{n-2}\}$ if $x_{n-5} \notin \{x_{n-3},x_{n-2}\}$ or $x_{n-4} = x_{n-2}$,
\item $\Sigma_q\setminus \{x_{n-4},x_{n-3},x_{n-2}\}$ if $x_{n-5} \in \{x_{n-3},x_{n-2}\}$ and $x_{n-4}\neq x_{n-2}.$
\end{itemize}
and $\varphi_3^{-1}(\bx)=(x_1\ldots x_{n-3},i)$ where $i$ is the index of $x_{n-2}$ in 
\begin{itemize}
\item $\Sigma_q\setminus \{x_{n-5},x_{n-3}\}$ if $x_{n-5} \neq x_{n-3},$
\item $\Sigma_q\setminus \{x_{n-4},x_{n-3}\}$ if $x_{n-5} = x_{n-3}.$
\end{itemize}
We can prove that $\varphi_2,\varphi_3$ are bijections as $\varphi_2\circ \varphi_2^{-1}$, $\varphi_2^{-1}\circ \varphi_2$, $\varphi_3\circ \varphi_3^{-1}$, and $\varphi_3^{-1}\circ \varphi_3$ are identity maps on their respective sets. Since $\Irr_{\le{3}}(n,q)=\Irr_{\le{3}}^{(s)}(\bL_1,n,q) \cup \Irr_{\le{3}}^{(s)}(\bL_2,n,q) \cup \Irr_{\le{3}}^{(s)}(\bL_3,n,q)$, we have the recursion \eqref{recursion:i3}.
\end{proof}

As before, the following corollary is immediate.

\begin{corollary}\label{cor:recursion-prefix2}
Let $\bp\in\Sigma_q^\ell$
For $n\ge \ell+3$,
\begin{align*}
&\left|\Irr^{(p)}_{{\le}3}(\bp,n,q) \right| =
(q-2)\left|\Irr^{(p)}_{{\le}3}(\bp,n-1,q)\right|+ \\
&(q-3)\left|\Irr^{(p)}_{{\le}3}(\bp,n-2,q) \right|+(q-2) \left|\Irr^{(p)}_{{\le}3}(\bp,n-3,q) \right|. \label{recursion:p3}
\end{align*}
\end{corollary}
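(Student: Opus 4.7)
The plan is to inherit the three bijections $\varphi_1,\varphi_2,\varphi_3$ constructed in the preceding proposition and show that they restrict cleanly to the subclass of irreducible words beginning with the fixed prefix $\bp$. Each $\varphi_i$ is entirely ``right-acting'': it appends $i$ new symbols to the end of the input $\bx$ and leaves every existing position of $\bx$ untouched. Hence, whenever $\bx$ has prefix $\bp$ and $|\bx|\ge \ell$, the image $\varphi_i(\bx,j)$ also has prefix $\bp$; the symmetric observation holds for the inverse $\varphi_i^{-1}$, which simply truncates the final $i$ characters. The hypothesis $n\ge \ell+3$ is precisely what guarantees that the shortest input length $n-3$ is still at least $\ell$, so $\bp$ fits inside the input word in all three cases and is not overwritten by the suffix operation.

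With this in hand, I would restrict each $\varphi_i$ to obtain bijections
\begin{align*}
\Irr^{(p)}_{\le{3}}(\bp,n-1,q)\times [q-2] &\;\to\; \cH_1,\\
\Irr^{(p)}_{\le{3}}(\bp,n-2,q)\times [q-3] &\;\to\; \cH_2,\\
\Irr^{(p)}_{\le{3}}(\bp,n-3,q)\times [q-2] &\;\to\; \cH_3,
\end{align*}
where $\cH_i$ denotes the subset of $\Irr^{(s)}_{\le{3}}(\bL_i,n,q)$ consisting of those words beginning with $\bp$. Because the suffix partition via $\bL_1,\bL_2,\bL_3$ depends only on the last six characters of a word and is insensitive to what sits earlier, the three sets $\cH_1,\cH_2,\cH_3$ partition $\Irr^{(p)}_{\le{3}}(\bp,n,q)$. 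Adding the three cardinalities yields the claimed recursion.

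The main subtlety to verify is the case in which $\ell$ is close to $n-3$, so that the positions $x_{n-5},x_{n-4},x_{n-3}$ read by $\varphi_2,\varphi_3$ (and the trailing positions read by $\varphi_1$) actually lie inside the prescribed prefix $\bp$. Here I would check that the ``forbidden symbol'' sets $\Sigma_q\setminus\{\cdot\}$ used to pick $\sigma$ in the definitions of the $\varphi_i$ depend only on the values at those positions of $\bx$, and are completely agnostic to whether those positions happen to be pinned by $\bp$ or not. Consequently the restricted maps and their inverses are still well-defined and mutually inverse, completing the argument.
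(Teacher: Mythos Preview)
Your argument is correct and is exactly the approach the paper intends: it states the corollary as ``immediate'' by the same reasoning given before Corollary~\ref{cor:recursion-prefix}, namely that each $\varphi_i$ only appends symbols at the right end (and each $\varphi_i^{-1}$ only truncates them), so restricting domains and codomains to words with prefix $\bp$ preserves the bijections and the suffix-based partition. Your additional care about the boundary case $n-3=\ell$ is a harmless elaboration of the same idea.
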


We compute the values of $\rate_\lek(q)$ for $k\in\{2,3\}$ in Table~\ref{rates}.
Let $T(n,q)$ be the largest size of an $(n,{\le}3;q)$-TD code and define 
$\tau(q)\triangleq (1/n)\limsup_{n\to\infty}\log_q T(n,q)$.
From \cite{Jain, Chee:2017}, we have that that $\rate_{{\le}3}(q)\le \tau(q)\le \rate_{{\le}2}(q)$.
Therefore, Table~\ref{rates} demonstrates that ${\C}(n,{\le}3;q)$ is {\em almost} optimal for $q\ge 5$.

\begin{table}[h!]
\centering
\begin{tabular}{|c|| c | c| c| c| c|c|}
 \hline
 $q$ & 3 & 4 & 5 & 6 & 7& 8\\ \hline
 $\rate_{{\le}2}(q)$ & 0.4380 & 0.7249 & 0.8280 & 0.8788 & 0.9081 & 0.9269 \\\hline
 $\rate_{{\le}3}(q)$ & 0.3479 & 0.7054 & 0.8208 & 0.8753 & 0.9062 & 0.9258\\\hline

 \end{tabular}
  \caption{The asymptotic information rates for $\lek$-irreducible words for $k\in\{2,3\}$}
\label{rates}
\end{table}

\section{Finite State Encoder}\label{sec:fse}

For integers $\ell<m$, an {\em $(\ell,m)$-finite state encoder} is triple $(\cS,\cE,\cL)$, 
where $\cS$ is a set of {\em states},
$\cE\subset \cS\times\cS$ is a set of {\em directed edges}, and 
$\cL:\cE\to \Sigma_q^\ell\times \Sigma_q^m$ is an {\em edge labeling}.

To encode irreducible words, we choose $m\ge2k-1$, and set 
\[ \cS \triangleq \Irr_\lek(m,q) \mbox{ and }
\cE \triangleq \{(\bx,\bx') : \bx\bx'\in \Irr_\lek(2m,q)\}.
\]
For $\bx\in \cS$, we define the {\em neighbours} of $\bx$ to be 
$N(\bx)\triangleq \{\bx' : (\bx,\bx')\in \cE\}$. We also consider the quantity
$\Delta_{\lek}(m,q)\triangleq \min \{|N(\bx)|: \bx\in\cS\}$ and choose $\ell$ 
such that
\begin{equation}\label{constraint:upper}
\Delta_{\lek}(m,q)\ge q^\ell.
\end{equation}

We now define the edge labelling $\cL$ using this choice of $\ell$.
For $\bx\in\cS$, since $|N(\bx)|\ge q^\ell$, we may use the set $\Sigma^\ell$ 
to index the first $q^\ell$ words in $N(\bx)$. 
Hence, for $\bx'\in S$, if $\bx'$ is one of the first $q^\ell$ words, we let $\by_{\bx'}\in \Sigma^\ell$ denote the index.
Otherwise, we simply set $\by_{\bx'}=-$. 
Therefore, for $(\bx,\bx')\in \cE$, we set $\cL(\bx,\bx')=(\by_{\bx'},\bx')$.
Finally, we call this triple an {\em $(\ell,m)$- finite state encoder for irreducible words}.
\begin{example}\label{exa:fsm1}
Let $k=2$, $q=3$, $m=3$. 
Then $\cS=\{010, 012, 020$, $021, 101, 102, 120, 121, 201, 202, 210, 212\}$, and 
\begin{align*}
N(010)&=\{201, 210, 212\},\\
N(012) &=\{010, 012, 021, 101, 102\}.
\end{align*}
We verify that $\Delta_{{\le}2}(3,3)=3$ and so, we choose $\ell=1$.
So, we can set $\cL$ to map the edges exiting the state $010$ 
as follow:
\vspace{-5mm}
 
{\small
\[
(010,201)\mapsto (0,201),~
(010,210)\mapsto (1,210),~
(010,212)\mapsto (2,212).
\]
}
We represent the mapping $\cL$ using the following lookup table.
\begin{center}
\begin{tabular}{|c||c|c|c|c|c|}
\hline
$\bx$ & \multicolumn{5}{c|}{$N(\bx)$}\\ \hline
 & $0$ & $1$ & $2$ & -- & -- \\ \hline
010 & 201 & 210 & 212 & -- & -- \\
012 & 010 & 012 & 021 & 101 & 102 \\
020 & 102 & 120 & 121 & -- & -- \\
021 & 012 & 020 & 021 & 201 & 202 \\
101 & 201 & 202 & 210 & -- & -- \\
102 & 010 & 012 & 101 & 102 & 120 \\
120 & 102 & 120 & 121 & 210 & 212 \\
121 & 012 & 020 & 021 & -- & -- \\
201 & 020 & 021 & 201 & 202 & 210 \\
202 & 101 & 102 & 120 & -- & -- \\
210 & 120 & 121 & 201 & 210 & 212 \\
212 & 010 & 012 & 021 & -- & -- \\
\hline
\end{tabular}
\end{center}
Here, to determine $\cL(\bx,\bx')$, we look at the row corresponding to $\bx$ and 
look at the column corresponding to $\bx'$. If the column is $\by_{\bx'}$, then 
$\cL(\bx,\bx')=(\by_{\bx'},\bx')$. So, $\cL(012,010)=(0,010)$.
\end{example}

\subsection{Encoding}

Let $s$ be a positive integer and set $n=s\ell$. 
Suppose the message $\by=\by_1\by_2\ldots\by_s\in \Sigma^{s\ell}$.

To encode $\by$ using an $(\ell,m)$-finite state encoder for irreducible words, 
we do the following:
\begin{enumerate}[(I)]
\item Set $\bx_0$ to the first word in $\cS=\Irr_{\lek}(m,q)$.
\item For $i\in [s]$, set $\bx_i$ to be the unique word such that $\cL(\bx_{i-1},\bx_i)=(\by_i,\bx_i)$.
\item The encoded irreducible word is $\bx=\bx_1\bx_2\ldots \bx_s$. 
\end{enumerate}

\begin{example}[Example \ref{exa:fsm1} continued] Let $s=3$ and consider the message $\by=012$.
First, we set $\bx_0=010$. Then $\bx_1=201$ since $\cL(010,201)=(0,201)$.
Similarly, $\bx_2=021$ and $\bx_3=021$.

Therefore, the encoded word $\bx$ is $201021021$.
\end{example}

Since the encoded word has length $sm$, 
the $(\ell,m)$-finite state encoder for irreducible words has rate $\ell/m$.
In the next subsection, we see that $\ell$ and $m$ can be chosen in such a way that
the rate $\ell/m$ approaches $\rate_\lek(q)$ {\em quickly}. 

\subsection{Approaching the Asymptotic Information Rate}

Pick $\epsilon>0$. We find suitable values for $\ell$ and $m$ so that the encoding rate 
satisfies 
\begin{equation}\label{constraint:lower}
\ell/m\ge \rate_\lek(q)-\epsilon.
\end{equation}

In particular, we show that $\ell=\Theta(1/\epsilon)$ and $m=\Theta(1/\epsilon)$ suffice to guarantee \eqref{constraint:lower}. 

Recall that $\ell$ and $m$ are required to satisfy \eqref{constraint:upper}. 
Hence, we determine $\Delta_\lek(m,q)$. 
Surprisingly, these values have the same recursive structure as $I_\lek(m,q)$ and 
therefore, have the same growth rate.

\begin{proposition}
We have that $\Delta_{{\le}2}(3,q)=q(q-2)^2$, $\Delta_{{\le}2}(4,q)=(q-2)^2(q^2-q-1)$, and 
for $m\ge 5$,
\begin{equation}\label{recursion:D2}
\Delta_{{\le}2}(m,q)=
(q-2)\Delta_{{\le}2}(m-1,q)+(q-2)\Delta_{{\le}2}(m-2,q).
\end{equation}
\end{proposition}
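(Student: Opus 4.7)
The plan is to reduce the computation of $|N(\bx)|$ to the prefix-recurrence already established in Corollary~\ref{cor:recursion-prefix}, and then to locate the minimizing $\bx$ by inspecting the initial data of the one-dimensional recurrence that results. Since $|N(\bx)| = |\Irr^{(p)}_{{\le}2}(\bx, 2m, q)|$ by definition, setting $N_\bx(n) \triangleq |\Irr^{(p)}_{{\le}2}(\bx, m+n, q)|$ and applying the corollary with $\bp = \bx$ of length $m$ yields
\[
N_\bx(n) = (q-2)\,N_\bx(n-1) + (q-2)\,N_\bx(n-2), \quad n \ge 2,
\]
so the whole sequence $N_\bx$ is determined by $N_\bx(0)=1$ and $N_\bx(1)$, and the quantity of interest is $N_\bx(m)$.

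Next, I would pin down $N_\bx(1)$ by a short boundary analysis. Writing $x_{m-2}=\gamma$, $x_{m-1}=a$, $x_m=b$, appending a symbol $\sigma$ preserves $\le 2$-irreducibility iff $\sigma \ne b$ (no length-one duplication) and the length-four window $\gamma a b \sigma$ is not of the form $uvuv$ (no length-two duplication). The latter forces $\sigma \ne a$ exactly when $\gamma = b$, i.e., when $\bx$ ends in a palindrome of the form $bab$. Hence $N_\bx(1) = q-2$ in that palindromic case (call the corresponding sequence $A_n$) and $N_\bx(1) = q-1$ otherwise (call it $B_n$); these two cases exhaust all $\bx \in \Irr_{{\le}2}(m,q)$.

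Third, since $A_0 = B_0$, $A_1 < B_1$, and both sequences satisfy the same recurrence with nonnegative coefficients, a one-step induction shows $A_n < B_n$ for every $n \ge 1$. Therefore $\Delta_{{\le}2}(m,q) = A_m$ for all $m \ge 3$, the minimum being attained by any irreducible $\bx$ of length $m$ whose suffix is of the form $bab$ (such $\bx$ exist for every $m \ge 3$ whenever $q \ge 3$). A direct unrolling with $A_0=1$ and $A_1=q-2$ gives $A_2 = (q-2)(q-1)$, $A_3 = q(q-2)^2$ and $A_4 = (q-2)^2(q^2-q-1)$, which establishes the two stated base cases. For $m \ge 5$, the recurrence $A_m = (q-2)A_{m-1} + (q-2)A_{m-2}$ combined with the already-proved identity $\Delta_{{\le}2}(j,q) = A_j$ at $j \in \{m-1, m-2\}$ immediately delivers \eqref{recursion:D2}.

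The only delicate point is the second step: one must be careful to check that the window $\gamma a b \sigma$ is the \emph{only} new length-four window created when appending $\sigma$, so that the two initial values $q-2$ and $q-1$ are exhaustive and exact. Once this is settled, the dominance of the $bab$-suffix case and the claimed recurrence are forced by the nonnegativity of the prefix recurrence, and nothing subtle remains.
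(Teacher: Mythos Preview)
Your proof is correct. Both your argument and the paper's hinge on Corollary~\ref{cor:recursion-prefix}, but the step that identifies the minimizing $\bx$ differs. The paper uses alphabet symmetry to pick specific representatives $\bx$ with suffix $010$ and $\by$ with suffix $210$ (sharing the last two symbols) and observes the \emph{set} containment $N(\bx)\subseteq N(\by)$ directly; it then writes $\Delta_{{\le}2}(m,q)=\sum_{\sigma\notin\{0,1\}}\left|\Irr^{(p)}_{{\le}2}(10\sigma,m+2,q)\right|$ and applies the prefix recursion to each summand. You instead identify $|N(\bx)|=\left|\Irr^{(p)}_{{\le}2}(\bx,2m,q)\right|$ as a single prefix-count, run the common recurrence from the two possible initial pairs $(1,q-2)$ and $(1,q-1)$, and compare via monotonicity of the recurrence with nonnegative coefficients. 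Your route is slightly more self-contained (no need to align the last two symbols of the representatives or decompose $N(\bx)$ as a union over the first appended symbol), while the paper's containment argument makes the inequality $|N(\bx)|\le|N(\by)|$ immediate once the representatives are chosen.
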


\begin{proof}Observe that by symmetry, we have $|N(\bx)|=|N(\bx')|$ for $\bx,\bx'\in\Irr_{\le{2}}^{(s)}(2,m,q)$. 
Similarly, $|N(\by)|=|N(\by')|$ for $\by,\by'\in \Irr_{\le{2}}^{(s)}(3,m,q)$.

We first show that $|N(\bx)|\le |N(\by)|$ for $\bx\in\Irr_{\le{2}}^{(s)}(2,m,q)$ and $\by\in\Irr_{\le{2}}^{(s)}(3,m,q)$.
Without loss of generality, we assume $\bx\in\Irr_{\le{2}}^{(s)}(010,m,q)$ and $\by\in\Irr_{\le{2}}^{(s)}(210,m,q)$.
Then the neighbours of $\bx$ and $\by$ are given by
\begin{align}
N(\bx) &=\left\{\bx' : 10\bx' \in \bigcup_{\sigma \notin \{0,1\}} \Irr_{{\le}2}^{(p)}(10\sigma,m+2,q)\right\},\label{eq:Nx}\\
N(\by) &=\left\{\by' : 10\by' \in \bigcup_{\sigma \ne 0} \Irr_{{\le}2}^{(p)}(10\sigma,m+2,q)\right\}.\label{eq:Ny}
\end{align}
Since $N(\bx)\subseteq N(\by)$, the inequality $|N(\bx)|\le |N(\by)|$ follows. 
Hence, $\Delta_{{\le}2}(m,q)=|N(\bx)|$ where $\bx\in\Irr_{\le{2}}^{(s)}(010,m,q)$.

Since $\Delta_{{\le}2}(m,q)=\sum_{\sigma \notin \{0,1\}} \left|\Irr_{{\le}2}^{(p)}(10\sigma,m+2,q)\right|$,
the recursive equation \eqref{recursion:D2} follows from Corollary~\ref{cor:recursion-prefix}.
\end{proof}

For $k=3$, we have the following recursive equation.
\begin{proposition}
We have that 
\begin{align*}
\Delta_{{\le}3}(5,q)&=(q-2)(q^2-2q-1)^2,\\
\Delta_{{\le}3}(6,q)&=(q-1)(q^5-6q^4+9q^3+4q^2-8q-9),\\
\Delta_{{\le}3}(7,q)&=(q-2)(q^6-6q^4+9q^3+4q^2-8q-10q+3),
\end{align*}
\noindent and for $m\ge 8$,
\begin{align}
\Delta_{{\le}3}(m,q)&=
(q-2)\Delta_{{\le}3}(m-1,q)+(q-3)\Delta_{{\le}3}(m-2,q)\notag\\
&~~~+(q-2)\Delta_{{\le}3}(m-3,q).\label{recursion:D3}
\end{align}
\end{proposition}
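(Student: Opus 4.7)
The plan mirrors the preceding proof for $\Delta_{\le 2}(m,q)$. First, by symmetry, whether $\bx\bx' \in \Irr_{\le 3}(2m,q)$ depends on $\bx$ only through its last $5 = 2k-1$ symbols, since any forbidden pattern ($aa$, $abab$, or $abcabc$) spanning the boundary involves at most five symbols from $\bx$. Letting $\bs$ denote this length-$5$ suffix, we have $|N(\bx)| = |\Irr_{\le 3}^{(p)}(\bs, m+5, q)|$, and this count is invariant under permutations of $\Sigma_q$, so $\Delta_{\le 3}(m,q)$ equals the minimum of this quantity over the finite list of length-$5$ suffix pattern types of $\le 3$-irreducible words.

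Second, I identify the minimising pattern $\bs^*$ via an inclusion argument analogous to the $k=2$ case. The absolute forbidden values of the leading symbol $\sigma = x'_1$ are $\sigma \ne x_m$ (always, from $aa$); $\sigma \ne x_{m-1}$ when $x_{m-2} = x_m$ (from $abab$); and $\sigma \ne x_{m-2}$ when $x_{m-4} = x_{m-1}$ and $x_{m-3} = x_m$ (from $abcabc$). Irreducibility of $\bx$ rules out the two conditional conditions holding simultaneously, so the maximum number of absolute forbidden leading symbols is two, attained either by suffixes with $x_{m-2} = x_m$ (e.g.\ of the shape $cdaba$) or with $x_{m-4} = x_{m-1}$ and $x_{m-3} = x_m$ (i.e.\ $bacba$). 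For every other pattern type $\bs$, I construct an injection $N(\bx^*) \hookrightarrow N(\bx)$ after a suitable alphabet permutation, pinning down $\bs^*$ from this short list.

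Third, with $\bs^*$ fixed, I rewrite
\[\Delta_{\le 3}(m,q) = \sum_{\sigma} \left|\Irr_{\le 3}^{(p)}(\bs^*\sigma, m+5, q)\right|,\]
where the sum ranges over the $q-2$ admissible leading symbols. Each summand has a prefix of length $6$, so Corollary~\ref{cor:recursion-prefix2} applies to each term whenever $m + 5 \ge 9$, delivering the three-term recursion \eqref{recursion:D3} after summation. This is valid for $m \ge 8$, where all three $\Delta_{\le 3}(m-i, q)$ on the right-hand side correspond to states of length at least $5$. The base cases $\Delta_{\le 3}(5,q)$, $\Delta_{\le 3}(6,q)$, and $\Delta_{\le 3}(7,q)$ are verified by directly enumerating the admissible continuations of a state with suffix $\bs^*$, stratifying by how closely each continuation avoids each boundary pattern.

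The main obstacle is the case analysis in step two. For $k = 3$ the list of length-$5$ suffix pattern types of $\le 3$-irreducible words is considerably richer than the two types ($aba$ and $abc$) arising in the $k = 2$ setting, and the boundary constraints now include \emph{conditional} restrictions on the first two or three symbols of $\bx'$, not just the first. Exhibiting the neighbour-set inclusion uniformly across every non-minimising pattern type, rather than merely counting absolute forbidden leading symbols, is where the bulk of the work lies.
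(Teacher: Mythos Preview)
Your proposal is correct and follows essentially the same strategy as the paper: reduce by symmetry to length-five suffix types, identify the minimising suffix via neighbour-set inclusions, then express $\Delta_{\le 3}(m,q)$ as a sum of prefix-constrained irreducible-word counts and invoke Corollary~\ref{cor:recursion-prefix2}. Your candidate $bacba$ is exactly the paper's minimiser $abcab$ (up to relabelling), and your inclusion argument matches the paper's demonstration $N(\bx)\subseteq N(\by)$. The only cosmetic difference is that you carry the full five-symbol suffix $\bs^*$ as prefix context, writing $\Delta_{\le 3}(m,q)=|\Irr^{(p)}_{\le 3}(\bs^*,m+5,q)|$, whereas the paper carries only the last three symbols together with an explicit exclusion of one extra leading value, writing $\Delta_{\le 3}(m,q)=\sum_{\sigma\notin\{1,2\}}|\Irr^{(p)}_{\le 3}(201\sigma,m+3,q)|$; the two formulations are equivalent.
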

\begin{proof}
Let $\bL$ be the set of all possible suffixes of length five of an irreducible word.
We can then verify that $\bL=\{abcab,abcac,abcad,abcba,abcbd,abaca,abacb,abacd,abcde,$
\\ $abcdb,abcdc,abcda \mid a,b,c,d,e \mbox{ distinct in } \Sigma_q\}$.
We first show that $\Delta_{{\le}3}(m,q)=|N(\bx)|$ where $\bx \in \Irr_{\leq 3}^{(s)}(abcab,m,q)$. 
In other words, we need to show that 
$|N(\bx)|\le |N(\by)|$ for $\bx\in\Irr_{\le{3}}^{(s)}(abcab,m,q)$ and $\by\in\Irr_{\le{3}}^{(s)}(\bp,m,q)$, where $\bp \in \bL$. 
We demonstrate the inequality in the case when $\bp=abcad$, and the remaining cases can be done similarly. 
Without loss of generality, we assume that $\bx \in \Irr_{\le{3}}^{(s)}(01201,m,q)$ and $\by \in \Irr_{\le{3}}^{(s)}(03201,m,q)$. Then the neighbours of $\bx$ and $\by$ are given by
\begin{align}
N(\bx) &=\left\{\bx' : 201\bx' \in \bigcup_{\sigma \notin \{1,2\}} \Irr_{{\le}3}^{(p)}(201\sigma,m+3,q)\right\},\label{eq:Nx}\\
N(\by) &=\left\{\by' : 201\by' \in \bigcup_{\sigma \ne 1} \Irr_{{\le}3}^{(p)}(201\sigma,m+3,q)\right\}.\label{eq:Ny}
\end{align}
Since $N(\bx)\subseteq N(\by)$, the inequality $|N(\bx)|\le |N(\by)|$ follows. Hence, $\Delta_{{\le}3}(m,q)=|N(\bx)|$ where $\bx\in\Irr_{\le{3}}^{(s)}(01201,m,q)$.

Since $\Delta_{{\le}3}(m,q)=\sum_{\sigma \notin \{1,2\}} \left|\Irr_{{\le}3}^{(p)}(201\sigma,m+3,q)\right|$,
the recursive equation \eqref{recursion:D3} follows from Corollary~\ref{cor:recursion-prefix2}.
\end{proof}

Recall that $\lambda_2$ and $\lambda_3$ are roots of the equations
$x^2-(q-2)x-(q-2)=0$ and $x^3-(q-2)x^2-(q-3)x-(q-2)=0$, respectively.

Set $\kappa_2$ such that $\Delta_{{\le}2}(m,q)\ge \kappa_2\lambda_2^m$ for $m\in\{3,4\}$.
Similarly, set $\kappa_3$ so that  $\Delta_{{\le}3}(m,q)\ge \kappa_3\lambda_3^m$ for $m\in\{5,6,7\}$.
Then it follows from an inductive argument and recursions \eqref{recursion:D2} and \eqref{recursion:D3} that 
\begin{equation}\label{Delta}
\Delta_{\lek}(m,q)\ge \kappa_k\lambda_k^m \mbox{ for all }m.
\end{equation}
We are now ready to present the main theorem of this section.

\begin{theorem}Let $k\in\{2,3\}$. Set $c_k=\rate_{\lek}(q)=\log_q\lambda_k$.
For $\epsilon>0$, if we choose $m$ and $\ell$ such that
\begin{align}
\ell&=\ceil*{\frac{(c_k-\epsilon)(c_k-\log_q{\kappa_k})}{\epsilon}},\label{value:ell}\\
m&=\ceil*{\frac{\ell-\log_q{\kappa_k}}{c_k}},\label{value:m}
\end{align}
then the $(\ell,m)$-finite state encoder has rate at least $\rate_{\lek}(q)-\epsilon$.
\end{theorem}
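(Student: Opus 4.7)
The plan is to verify two properties of the chosen pair $(\ell,m)$: first, that \eqref{constraint:upper} holds so that the $(\ell,m)$-finite state encoder is actually well-defined; and second, that the resulting encoding rate $\ell/m$ meets the target $c_k-\epsilon$. Both steps reduce to short algebraic manipulations of the growth bound $\Delta_{\lek}(m,q)\ge \kappa_k\lambda_k^m$ from \eqref{Delta}, combined with the identity $c_k=\log_q\lambda_k$ and the two ceiling bounds implicit in \eqref{value:ell} and \eqref{value:m}.

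For the first step, I would take $\log_q$ of the inequality \eqref{Delta} to rewrite the requirement $\Delta_{\lek}(m,q)\ge q^\ell$ as the linear condition $mc_k+\log_q\kappa_k\ge \ell$, i.e., $m\ge (\ell-\log_q\kappa_k)/c_k$. Since \eqref{value:m} takes $m$ to be exactly the ceiling of this quantity, the condition \eqref{constraint:upper} is automatic, the edge labelling $\cL$ is well-defined at every state, and the encoder emits irreducible words at rate $\ell/m$.

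For the rate bound, I would use the complementary upper bound $m\le (\ell-\log_q\kappa_k)/c_k+1$ coming from the same ceiling, which yields
\[
\frac{\ell}{m}\;\ge\;\frac{\ell\, c_k}{\ell-\log_q\kappa_k+c_k}.
\]
A single cross-multiplication shows that this lower bound is at least $c_k-\epsilon$ if and only if $\epsilon\ell\ge (c_k-\epsilon)(c_k-\log_q\kappa_k)$, which is precisely what \eqref{value:ell} enforces. No step poses a real obstacle: the argument is essentially a two-line calculation sandwiched between the two ceiling bounds. The only sanity check worth flagging is that the right-hand side of \eqref{value:ell} is positive, which reduces to $c_k>\log_q\kappa_k$, i.e., $\lambda_k>\kappa_k$; this is immediate from the base-case values of $\Delta_\lek$ recorded in the two preceding propositions. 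The promised scaling $\ell,m=\Theta(1/\epsilon)$ then follows by inspection of \eqref{value:ell} and \eqref{value:m}.
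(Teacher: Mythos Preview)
Your proposal is correct and follows essentially the same approach as the paper: both verify \eqref{constraint:upper} and \eqref{constraint:lower} via the bound \eqref{Delta} and the two ceiling inequalities, with identical algebra (you just swap the order in which the two constraints are checked). Your extra remark on the positivity of $c_k-\log_q\kappa_k$ is a reasonable sanity check that the paper leaves implicit.
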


\begin{proof}We have to verify that \eqref{constraint:upper} and \eqref{constraint:lower} 
hold for the choice of $\ell$ and $m$.
Now, \eqref{value:ell} implies that $\epsilon\ell\ge (c_k-\epsilon)(c_k-\log_q{\kappa_k})$, and equivalently, 
$c_k\ell/(\ell-\log_q\kappa_k+c_k)\ge c_k-\epsilon$.
Therefore, 
\[\frac{\ell}{m} \ge \frac{\ell}{1+(\ell-\log_q\kappa_k)/c_k}=\frac{c_k\ell}{\ell-\log_q\kappa_k+c_k}\ge c_k-\epsilon.
\]
Thus, we verify \eqref{constraint:lower}.

Next, from \eqref{Delta} and \eqref{value:m}, we have that 
\[ \Delta_{\lek}(m,q)\ge \kappa_k\lambda_k^{(\ell-\log_q\kappa_k)/ \log_q\lambda_k}=q^\ell.\]
Hence, we verify \eqref{constraint:upper} and complete the proof.
\end{proof}

Therefore, to achieve encoding rates at least $\rate_{\lek}(q)-\epsilon$, we only require $\ell=\Theta(1/\epsilon)$ and $m=\Theta(1/\epsilon)$. 
If we naively use a lookup table to represent $(\cS,\cE,\cL)$, we require $q^{\Theta(1/\epsilon)}$ space.
Furthermore, using binary search, the $(\ell,m)$-finite state encoder for irreducible words encodes in $O(n/\epsilon)$ time.
In the next section, we use combinatorial insights from \eqref{recursion:i2} and \eqref{recursion:i3} 
to reduce the space requirement to $O(1/\epsilon^2)$. 

\section{Ranking/Unranking algorithm}\label{sec:rank}

A {\em ranking function} for a finite set $S$ of cardinality $N$ is a bijection 
${\rm rank}:S\rightarrow [N]$.
Associated with the function {\rm rank} is a unique {\em unranking function}
${\rm unrank}:[N]\rightarrow S$,
such that ${\rm rank}(s)=j$ if and only if ${\rm unrank}(j)=s$ for all $s\in S$ and
$j\in[N]$. In this section, we present an algorithm for
ranking and unranking $\Irr_{\lek}(n,q)$. 
For ease of exposition, we focus on the case where $k=2$ and present the ranking/unranking algorithm for $k=3$ at the end of the section.
The basis of our ranking and unranking algorithms is the bijections defined in Section~\ref{sec:enumerate}. As implied by the codomains of $\phi$ and $\psi$, for $n\ge 4$, we order the words in $\Irr_{{\le}2}(n,q)$ 
such that words in $\Irr_{{\le}2}^{(s)}(3,n,q)$ are ordered before words in $\Irr_{{\le}2}^{(s)}(2,n,q)$.
For words in $\Irr_{{\le}2}(2,q)$ and $\Irr_{{\le}2}(3,q)$, we simply order them lexicographically.  
We illustrate the idea behind the unranking algorithm through an example.

\begin{example}
Let $n=6$ and $q=3$. Then the values of $I_{{\le}2}(m,q)$ are as follow.
\begin{center}
\begin{tabular}{|c||c|c|c|c|c|}
\hline
$m$ & 2 & 3 & 4 & 5 & 6\\ \hline
$I_{{\le}2}(m,q)$ & 6 & 12 & 18 & 30 & 48\\ \hline
\end{tabular}
\end{center}
Suppose we want to compute ${\rm unrank}(40)$. Proposition~\ref{prop:recursion} gives
\begin{equation*}
\Irr_{{\le}2}(6,3) = \phi(\Irr_{{\le}2}(5,3)\times [1]) \cup \psi(\Irr_{{\le}2}(4,3)\times[1]). 
\end{equation*}
Now, we are interested in the 40th word of $\Irr_{{\le}2}(6,3)$.
Since $40>I_{{\le}2}(5,3)=30$, the 40th word of $\Irr_{{\le}2}(6,3)$ is 
the image of the $40-30=10$-th word in $\Irr_{{\le}2}(4,3)$ under $\psi$. 
Recursing tells us that the $10$-th word in $\Irr_{{\le}2}(4,3)$ is the $10$-th element in $\phi(\Irr_{{\le}2}(3,3)\times [1])$.
The $10$-th element of $\Irr_{{\le}2}(3,3)$ is $202$. This gives
\begin{align*}
{\rm unrank}(40) &= \psi(\phi(202,1),1) \\
&= \psi(202{\color{red}{1}},1) 
= 2021{\color{red}{01}}.
\end{align*}
\end{example}
The formal unranking algorithm is described in Algorithm~\ref{alg1}.

\begin{algorithm}
\small
\caption{${\tt unrank}(n,q,j)$}\label{alg1}
\begin{algorithmic}
\REQUIRE Integers $n \geq 2$, $q\ge 3$, $1\leq j\leq I_{{\le}2}(n,q)$
\ENSURE $\bx$, where $\bx$ is the codeword of rank $j$ in $\Irr_{{\le}2}(n,q)$
\vspace{0.05in}
\IF{$n\leq 3$} \RETURN{$j$-th codeword in $\Irr_{{\le}2}(n,q)$} \ENDIF

\IF{$j\le(q-2)I_{{\le}2}(n-1,q)$} 
\STATE{$j'\gets 1+\floor*{(j-1)/(q-2)}$}
\STATE{$i\gets (j-1)\pmod{q-2}+1$}
\RETURN{$\phi({\tt unrank}(n-1,q,j'),i)$}
\ELSE
\STATE{$j'\gets 1+\floor*{(j-(q-2)I_{{\le}2}(n-1,q)-1)/(q-2)}$}
\STATE{$i\gets (j-(q-2)I_{{\le}2}(n-1,q)-1)\pmod{q-2}+1$}
\RETURN{$\psi({\tt unrank}(n-2,q,j'),i)$} \ENDIF
\end{algorithmic}
\end{algorithm}

The corresponding ranking algorithm for $\Irr_{{\le}2}(n,q)$ has a similar recursive structure and is described
in Algorithm~\ref{alg2}.

\begin{example}
Let $n=6$ and $q=3$ as before.
Suppose we want to compute ${\tt rank}(202101)$. 
Since $202101\in\Irr^{(s)}_{{\le}2}(2,6,3)$, 
we have that $202101$ is obtained from applying $\psi$ to $2021\in \Irr_{{\le}2}(4,3)$.
Again, since $2021\in\Irr^{(s)}_{{\le}2}(3,6,3)$, 
we have that $202$ is obtained from applying $\phi$ to $202\in \Irr_{{\le}2}(3,3)$.
Therefore, 
\begin{align*}
{\rm rank}(202101)&= {\rm rank}(2021)+I_{{\le}2}(5,3)\\
&= {\rm rank}(202)+I_{{\le}2}(5,3)\\
&=10+30=40
\end{align*}
\end{example}
\begin{algorithm}
\small
\caption{${\tt rank}(n,q,\bx)$}\label{alg2}
\begin{algorithmic}
\REQUIRE $n \geq 2$, $q\ge 3$ and irreducible word $\bx$ of length $n$  
\ENSURE $j$, where $1 \leq j \leq I_{{\le}2}(n,q)$, the rank of $\bx$ in $\Irr_{{\le}2}(n,q)$
\vspace{0.05in}
\IF{$n\leq 3$} \RETURN{${\tt rank}(\bx)$ in ${\Irr}_{{\le}2}(n,q)$} \ENDIF
\IF{$x_{n}\ne x_{n-2}$}
\STATE{$\bx'\gets x_1x_2\ldots x_{n-1}$}
\STATE{$i\gets$ the index of $x_{n}$ in $\Sigma_q\setminus\{x_{n-2},x_{n-1}\}$}
\RETURN{$({\tt rank}(n-1,q,\bx')-1)(q-2)+i$} 
\ELSE
\STATE{$\bx'\gets x_1x_2\ldots x_{n-2}$}
\STATE{$i\gets$ the index of $x_{n-1}$ in $\Sigma_q\setminus\{x_{n-3},x_{n-2}\}$}
\RETURN{$({\tt rank}(n-2,q,\bx')-1)(q-2)+i+(q-2)I_{{\le}2}(n-1,q)$} 
\ENDIF
\end{algorithmic}
\end{algorithm}
\noindent The set of values of $\{I_{{\le}2}(m,q): m\le n\}$ required in Algorithms~\ref{alg1} and~\ref{alg2} 
can be precomputed based on the recurrence \eqref{recursion:i2}.
Since the numbers $I_{{\le}2}(m,q)$ grow exponentially, these $n$ stored values require $O(n^2)$ space.

Next, Algorithms~\ref{alg1} and~\ref{alg2} involve $O(n)$ iterations and
each iteration involves a constant number of arithmetic operations.
Therefore, Algorithms~\ref{alg1} and~\ref{alg2} involve $O(n)$ arithmetics operations and 
have time complexity $O(n^2)$. Similarly, the corresponding ranking/unranking algorithm for $\Irr_{{\le}3}(n,q)$ have similar recursive structures and are described
in Algorithm~\ref{alg3} and~\ref{alg4}.

\begin{algorithm}
\small
\caption{${\tt unrank}(n,q,j)$}\label{alg3}
\begin{algorithmic}
\REQUIRE Integers $n \geq 3$, $q\ge 3$, $1\leq j\leq I_{{\le}3}(n,q)$
\ENSURE $\bx$, where $\bx$ is the codeword of rank $j$ in $\Irr_{{\le}3}(n,q)$
\vspace{0.05in}
\IF{$n\leq 5$} \RETURN{$j$-th codeword in $\Irr_{{\le}3}(n,q)$} \ENDIF

\IF{$j\le(q-2)I_{{\le}3}(n-1,q)$} 
\STATE{$j'\gets 1+\floor*{(j-1)/(q-2)}$}
\STATE{$i\gets (j-1)\pmod{q-2}+1$}
\RETURN{$\varphi_1({\tt unrank}(n-1,q,j'),i)$}
\ELSE
\STATE{$j'\gets j-(q-2)I_{{\le}3}(n-1,q)$}
\ENDIF
\IF{$j'\le(q-3)I_{{\le}3}(n-2,q)$} 
\STATE{$j'\gets 1+\floor*{(j'-1)/(q-3)}$}
\STATE{$i\gets (j'-1)\pmod{q-3}+1$}
\RETURN{$\varphi_2({\tt unrank}(n-2,q,j'),i)$}
\ELSE
\STATE{$j'\gets j'-(q-3)I_{{\le}3}(n-2,q)$}
\STATE{$i\gets (j'-1)\pmod{q-2}+1$}
\RETURN{$\varphi_3({\tt unrank}(n-3,q,j'),i)$} \ENDIF
\end{algorithmic}
\end{algorithm}

\begin{algorithm}
\small
\caption{${\tt rank}(n,q,\bx)$}\label{alg4}
\begin{algorithmic}
\REQUIRE $n \geq 3$, $q\ge 3$ and irreducible word $\bx$ of length $n$  
\ENSURE $j$, where $1 \leq j \leq I_{{\le}3}(n,q)$, the rank of $\bx$ in $\Irr_{{\le}3}(n,q)$
\vspace{0.05in}
\IF{$n\leq 5$} \RETURN{${\tt rank}(\bx)$ in ${\Irr}_{{\le}3}(n,q)$} \ENDIF
\IF{$\bx \in \Irr_{\le{3}}^{(s)}(\bL_1,n,q)$}
\STATE{$(\bx',i) \gets \varphi_1^{-1}(\bx)$}
\RETURN{$({\tt rank}(n-1,q,\bx')-1)(q-2)+i$} \ENDIF

\IF{$\bx \in \Irr_{\le{3}}^{(s)}(\bL_2,n,q)$}
\STATE{$(\bx',i) \gets \varphi_2^{-1}(\bx)$}
\RETURN{$({\tt rank}(n-2,q,\bx')-1)(q-3)+i+(q-2)I_{{\le}3}(n-1,q)$} \ENDIF

\IF{$\bx \in \Irr_{\le{3}}^{(s)}(\bL_3,n,q)$}
\STATE{$(\bx',i) \gets \varphi_3^{-1}(\bx)$}
\RETURN{$({\tt rank}(n-3,q,\bx')-1)(q-2)+i+(q-2)I_{{\le}3}(n-1,q)+(q-3)I_{{\le}3}(n-2,q)$} \ENDIF
\end{algorithmic}
\end{algorithm}

\subsection{Reducing the Space Requirement for the Finite State Encoder}

As discussed earlier, a naive implementation of the $(\ell,m)$-finite state encoder in Section~\ref{sec:fse}
requires $q^{\Theta(m)}$ space (assuming $\ell=\Theta(m)$).
Here, we modify our unranking algorithm to reduce the space requirement $O(m)$ integers or $O(m^2)$ bits.

Recall the notation in Section~\ref{sec:fse}. 
We discuss only  for the case $k=2$ as the case $k=3$ is similar.
In particular, let $\bx_{i-1}\in \Irr_{{\le}2}(m,q)$ and $\by_{i}\in \Sigma_q^\ell$. 
Our encoding task is to determine the irreducible word $\bx_{i}$ in $N(\bx_i)$ whose index corresponds to $\by_i$.
Equivalently, if $j$ is the rank of $\by_{i}\in \Sigma_q^\ell$, then our task is to find $\bx_i$ such that its rank in $N(\bx_{i-1})$ is $j$. Since $\bx_{i-1}$ is irreducible and using symmetry, we assume that 
$\bx_{i-1}\in\Irr_{\le{2}}^{(s)}(010,m,q)$ or $\bx_{i-1}\in\Irr_{\le{2}}^{(s)}(210,m,q)$.
Furthermore, \eqref{eq:Nx} and \eqref{eq:Ny} imply that $N(\bx_{i_1})$ corresponds to
a union of ${\le}2$-irreducible words with prefixes of the form $10\sigma$.
Therefore, it suffices to provide ranking/unranking algorithms for $\Irr_{\le{2}}^{(p)}(10\sigma,m,q)$.

Since \eqref{recursion:p2} implies that $\Irr_{\le{2}}^{(p)}(10\sigma,m,q)$ has the same recursive structure as 
$\Irr_{\le{2}}(m,q)$, we can modify Algorithms~\ref{alg1} and~\ref{alg2} to unrank and rank $\Irr_{\le{2}}^{(p)}(10\sigma,m,q)$.

Now, to rank/unrank $\Irr_{\le{2}}^{(p)}(10\sigma,m,q)$ require $O(m)$ precomputed integers.
Assuming $q$ is constant, we require only $O(m)$ integers or $O(m^2)$ bits.
However, the running time is increased to $O(m^2)$.

\section{Conclusion}
For $k\in\{2,3\}$ and all $q$, we provided an explicit recursive formula for $\Irr_{\lek}(n,q)$ and 
hence, derived the expressions for $\rate_{\lek}(q)$.

We design efficient encoders/decoders for $\Irr_{\lek}(n,q)$.
\begin{enumerate}[(i)]
\item We provide an $(\ell,m)$-finite state encoder
and showe that for all $\epsilon>0$, if we choose $m=\Theta(1/\epsilon)$ and $\ell=\Theta(1/\epsilon)$, 
the encoder achieves rate that is at least $\rate_{\lek}(q)-\epsilon$.
The implementation of the finite state encoder with a lookup table runs in $O(n/\epsilon)$ time and requires $q^{\Theta(1/\epsilon)}$ space. 
However, if we use the ranking/unranking method in Section~\ref{sec:rank}, the encoder
runs in $O(n/\epsilon^2)$ time and requires $O(1/\epsilon)$ space. 
\item We provide an unranking algorithm for irreducible words whose encoding rate is $(1/n)\log_q({\rm Irr}_{\lek}(n,q))\ge\rate_{\lek}(q)$. The encoder runs in $O(n^2)$ time and requires $O(n^2)$ space.
\end{enumerate}

\end{document}